\newcommand{\MR}[1]{\mathrm{#1}}
\newcommand{\MB}[1]{\mathbf{#1}}
\newcommand{\MC}[1]{\mathcal{#1}}
\newcommand{\xj}[1]{$\vec{x}_{#1}$}
\newcommand{\Tr}[0]{\mathrm{Tr}}
\newcommand{\lm}[2]{\lambda_{#1}({#2})}
\newcommand{\tr}{\top}
\newcommand{\DM}{{\cal D}_{\mathbf{M}}}
\newcommand{\smalleqb}[1]
{
	\begingroup
	\makeatletter
	\def
	\f@size{1}
	{#1}
    \endgroup
}
\theoremstyle{proposition}
\theoremstyle{theorem}
\theoremstyle{definition}
\theoremstyle{lemma}
\newtheorem{lemma}{Lemma}
\theoremstyle{remark}
\date{\footnotesize{Preprint of the publication~\cite{hosseini2018feasibility}, as provided by the authors.\\
		The final publication is available via \url{https://www.elen.ucl.ac.be/esann/index.php?pg=proceedings} }}
\begin{document}
\title{Feasibility Based Large Margin Nearest Neighbor Metric Learning}

\author{Babak Hosseini$^1$ and Barbara Hammer$^1$
%
\thanks{This research was supported by the Cluster of Excellence Cognitive 
Interaction Technology 'CITEC' (EXC 277) at Bielefeld University, which
is funded by the German Research Foundation (DFG).}
%
\vspace{.3cm}\\
%
CITEC centre of excellence, Bielefeld University\\
Bielefeld, Germany\\
}

\maketitle

\begin{abstract}
Large margin nearest neighbor (LMNN) is a metric learner which optimizes the performance of the popular $k$NN classifier. 
However, its resulting metric relies on pre-selected target neighbors.
In this paper, we address the feasibility of LMNN's optimization constraints regarding these target points, 
and introduce a mathematical measure to evaluate the size of the feasible region of the optimization problem.
We enhance the optimization framework of LMNN by a weighting scheme which prefers data triplets which yield a larger 
feasible region. This increases the chances to obtain a  good metric as the solution of LMNN's problem.
We evaluate the performance of the resulting feasibility-based LMNN algorithm using synthetic and real datasets. The empirical results show an improved accuracy for different types of datasets in comparison to regular LMNN.
\end{abstract}

\section{Introduction}
Metric learning is the idea of finding an efficient metric for a given dataset to provide a more discriminant representation and consequently having a better classification performance. 
In basic terms, it tries to compact points of the same class while increasing the distance between different classes \cite{bellet2013survey}.
A well-known metric learning approach is the Large Margin Nearest Neighbor algorithm (LMNN) \cite{Weinberger2009} which transfers the maximum margin concept of SVM \cite{crammer2001algorithmic} to the $k$-nearest neighbor ($k$NN) framework \cite{cover1967nearest}.
LMNN has been used in many real problems such as face recognition \cite{guillaumin2009you}, motion classification \cite{7344819} and person identification \cite{li2016adaptive}.  
Several improvements have been suggested for the original LMNN approach such as complexity reduction of its optimization \cite{park2011efficiently}, eigenvalue based optimization \cite{ying2012distance}, multi-tasking extension \cite{parameswaran2010large} and hierarchical prepossessing of input data \cite{zhang2017hierarchical}. 
One challenge of LMNN  is  the efficient selection of neighboring targets in its optimization framework \cite{Weinberger2009}. 
As a common strategy, these target points are selected as nearest neighbors from the same class based on the  Euclidean distance. In multiple-pass  LMNN,
the neighborhood is recomputed based on the found distance measure
 to improve the  classification result
 \cite{Weinberger2009,gopfert2016convergence}.

In this paper we focus on the relation between selected neighboring targets and the feasible set of LMNN's optimization problem. We show that wrong choices of targets can severely shrink the regime of feasible solutions of the optimization problem.
%
%
%
%
We introduce a feasibility measure which quantifies the impact  of neighboring points with respect to the
size of the feasible set, and we use this measure as a weighting scheme in a modified version of  LMNN.
\\
\textbf{Road map}: In section \ref{sec:LMNN} we shortly review the original LMNN framework, and afterwards we study the concept of infeasible target neighbors. In section \ref{sec:fbm} we introduce a measure to evaluate the size of target's feasible regions, and
we introduce feasibility-based LMNN in section \ref{sec:fb_lmnn}. We implement our algorithm on synthetic and real datasets in section \ref{sec:exp}, and eventually the conclusion will be made in the last section.
\section{Large Margin Nearest Neighbor Algorithm}\label{sec:LMNN}
Consider the training set $\{(\vec x_i,y_i)\}_{i=1}^n$ with  data vectors $\vec{x}_i \in\mathbb{R}^d$ and their corresponding labels $y_i\in\{1,\ldots,C\}$. LMNN  tries to find a Mahalanobis metric of the form
$
\DM (\vec x_i,\vec x_j)= (\vec x_i-\vec x_j)^\top \mathbf{M}(\vec x_i-\vec x_j)
$
where $\mathbf{M}$ is a positive semidefinite (psd) matrix.
Its objective is to achieve compact neighboring data samples with the same label ({targets})  and far away neighboring points with different labels ({impostors}). 
Define 
$\MC{N}^{k}_{i}$ as the set of points within the $k$-nearest neighbors of $\vec{x}_i$ which have the same class
and  $\MC{I}^{k}_{i}$ as the set of points within the $k$-nearest neighbors of $\vec{x}_i$ which have a different class.
LMNN optimizes the following problem:
\begin{equation}
\begin{array}{ll}
\underset{\mathbf{M}}{\min}  &(1-\mu){\sum}_i\sum_{j \in \MC{N}^{k}_{i}} \DM(\vec x_i,\vec x_j) + \mu \sum_i\sum_{j \in \MC{N}^{k}_{i}}\sum_{l\in\MC{I}^{k}_{i}}\xi_{ijl}\\
\mathrm{s.t.} & \DM(\vec x_i,\vec x_l) - \DM(\vec x_i,\vec x_j) \ge 1-\xi_{ijl}\\
&\xi_{ijl}\ge 0 , ~~{\mathbf{M}} \succeq 0\, ~~ \forall i,j \in \MC{N}^{k}_{i} ,l\in\MC{I}^{k}_{i}
\label{eq:optl}
\end{array}
\end{equation}
where $\mu\in(0,1)$ balances the two objectives.
$\xi_{ijl}$ constitute slack variables of the constraints.
Eq.\ref{eq:optl} constitutes a convex problem with respect to $\mathbf{M}$ if the targets $\MC{N}^{k}_{i}$ and impostors $\MC{I}^{k}_{i}$ are  fixed \cite{Weinberger2009}. Nevertheless, different selections for these initial targets can lead to different solution $\mathbf{M}$. As suggested in \cite{Weinberger2009,gopfert2016convergence} a better strategy is to repeat LMNN's optimization multiple times (multiple-pass LMNN) while updating $\MC{N}^{k}_{i}$ and $\MC{I}^{k}_{i}$ in each run based on the resulting quadratic form $\mathbf{M}$. 
Yet, also this strategy  relies on the quality of the initial selection of these two sets.
%
%
%
%
\section{Infeasible Target Neighbors}\label{sec:inf}
We are interested in the question in which cases feasible solutions of the optimization problem
(\ref{eq:optl}) exist which do not require slack variables $\xi_{ijl}>0$.
This feasible regime is given as
\begin{equation}\label{eq:sm}
S:=\{\MB{M}\in \mathbb{R}^{d\times d}|\MB{M} \succeq 0 , \DM(\vec{x}_i,\vec{x}_j) < \DM(\vec{x}_i,\vec{x}_l)~~ \forall i,j \in \MC{N}^{k}_{i}, l \in \MC{I}^{k}_{i}\}
\end{equation}
For a triplet $i$, $j$, $l$, the metric constraint can be re-written as: 
\begin{equation}\label{eq:trc}
\begin{array}{ll}
&\Tr[\MB{Q}_{ijl}\MB{M}] := \Tr[((\vec{x}_i-\vec{x}_j)(\vec{x}_i-\vec{x}_j)^{\tr}-(\vec{x}_i-\vec{x}_l)(\vec{x}_i-\vec{x}_l)^{\tr})\MB{M}]<0\\ 
\end{array}
\end{equation}
Since  $\MB{M}$ is psd, a psd matrix $\MB{Q}_{ijl}$ leads to the infeasibility of
Eqn.~(\ref{eq:trc}), whereby this fact depends on the triplet $i$, $j$, $l$, only, and not the specific neighborhood. 
In this section, we discuss an extremal case, where the constraint induced by a triplet is infeasible, and 
we propose an according measure which has a clear geometric interpretation in this extremal case.
In the next section, we generalize this measure  to a suitable weighting scheme for more general settings.

A matrix $\MB{Q}:=\MB{Q}_{ijl}$ results from two vectors in the form
$\vec{a}\vec{a}^\tr-\vec{b}\vec{b}^\tr$, i.e.\ its rank is at most $2$.
After matrix transformation if necessary,  we can assume that only the first two dimensions of
the matrix, $\vec{a}$ and $\vec b$ relate to non-zero coefficients.
Denote the two possibly nonzero eigenvalues of $\MB{Q}$ 
as $\lambda_{\MR{min}}(\MB{Q})\le\lambda_{\MR{max}}(\MB{Q})$.
Note that eigenvectors are obviously located in the span of $\vec{a}$ and ${\vec b}$, and 
(after base transformation s.t.\ non-zero coefficients are denoted $(a_1,a_2)$ and $(b_1,b_2)$) 
they have the form
$$\lambda_{\MR{max}/\MR{min}}=
(a_1^2+a_2^2-b_1^2-b_2^2)/2 \pm \sqrt{(a_1^2+a_2^2-b_1^2-b_2^2)^2/4 + (a_1b_2-b_1a_2)^2 }$$
as one can easily infer from the characteristic polynomial of $\MB{Q}$.
Obviously, $\lambda_{\MR{min}}(\MB{Q})\le 0<\lambda_{\MR{max}}(\MB{Q})$
(unless vectors itself are degenerate).
The equality $\lambda_{\MR{min}}(\MB{Q})=0$ corresponds to linearly dependent
vectors $\vec{a}$ and $\vec{b}$, namely the equality $a_1b_2-b_1a_2=0$.
This setting  does not allow a feasible solution without slack variables.
In the following, we will argue that the measure
$r:=-\lambda_{\MR{min}}(\MB{Q})/\lambda_{\MR{max}}(\MB{Q})$ constitutes a reasonable weight vector to measure
the feasibility of the constraint corresponding to $\MB{Q}$ or the size of its feasible domain, respectively.
Obviously, $r=0$ is the case just described, an infeasible setting
due to the geometry of $\vec a=(\vec{x}_i-\vec{x}_j)$ and
$\vec b=(\vec{x}_i-\vec{x}_l)$.

\newpage
\section{Feasibility Measure}\label{sec:fbm}
We start with a general observation:
\begin{lemma}\label{lm:qm}
Denote the eigenvalues of a matrix $\MB{Q}\in \MR{R}^{d \times d}$ by 
$\lambda_1(\MB{Q})\ge
\lambda_2(\MB{Q})\ge\ldots $. The smallest/largest eigenvalue is denoted $\lm{min}{\MB{Q}}$ resp.\
$\lm{max}{\MB{Q}}$.
	For hermitian $\MB{Q} \in \MR{R}^{d \times d}$ and symmetric psd $\MB{M}\in \MR{R}^{d \times d}$,	
it holds
	$\lm{k}{\MB{Q}} \lm{min}{\MB{M}} \le \lm{k}{\MB{QM}}$
	for all $k$.
	\end{lemma}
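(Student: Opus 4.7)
The plan is to exploit the algebraic fact that $\mathbf{QM}$, although generally not symmetric, has the same spectrum as a symmetric matrix obtained by conjugating with $\mathbf{M}^{1/2}$; then Courant--Fischer (or Ostrowski's theorem) will deliver the bound.

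First I would reduce to the symmetric case. Using the standard identity that $\mathbf{AB}$ and $\mathbf{BA}$ share the same eigenvalues (with multiplicities) for square matrices, applied to $\mathbf{A}=\mathbf{Q}\mathbf{M}^{1/2}$ and $\mathbf{B}=\mathbf{M}^{1/2}$, I obtain
$$\lambda_k(\mathbf{QM}) \;=\; \lambda_k(\mathbf{M}^{1/2}\mathbf{Q}\mathbf{M}^{1/2}),$$
and the right-hand matrix is hermitian because $\mathbf{Q}$ is hermitian and $\mathbf{M}^{1/2}$ is symmetric. If $\mathbf{M}$ is only psd (and possibly singular), the square root is still well defined by the spectral theorem; alternatively one perturbs $\mathbf{M}\mapsto \mathbf{M}+\varepsilon\mathbf{I}$ to make it positive definite and takes $\varepsilon\to 0$, using continuity of the eigenvalues.

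Next, with $\mathbf{A}:=\mathbf{M}^{1/2}\mathbf{Q}\mathbf{M}^{1/2}$ symmetric, I would apply the Courant--Fischer min--max characterization
$$\lambda_k(\mathbf{A}) \;=\; \max_{\dim V = k}\;\min_{\vec{x}\in V,\,\vec{x}\neq 0}\;\frac{\vec{x}^{\top}\mathbf{A}\vec{x}}{\vec{x}^{\top}\vec{x}}.$$
Substituting $\vec{y}=\mathbf{M}^{1/2}\vec{x}$ recasts the Rayleigh quotient as $\vec{y}^{\top}\mathbf{Q}\vec{y}/\vec{x}^{\top}\vec{x}$, while the psd property of $\mathbf{M}$ yields the norm bound $\|\vec{y}\|^{2}=\vec{x}^{\top}\mathbf{M}\vec{x}\ge \lambda_{\min}(\mathbf{M})\,\|\vec{x}\|^{2}$. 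Choosing $V$ to be the image under $\mathbf{M}^{-1/2}$ of the span of the top-$k$ eigenvectors of $\mathbf{Q}$ forces $\vec{y}^{\top}\mathbf{Q}\vec{y}\ge \lambda_k(\mathbf{Q})\|\vec{y}\|^{2}$, and chaining the two inequalities gives the desired lower bound in the regime where $\lambda_k(\mathbf{Q})\ge 0$.

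The main obstacle I anticipate is the sign of $\lambda_k(\mathbf{Q})$: when $\lambda_k(\mathbf{Q})<0$, multiplying through by $\|\vec{y}\|^2\ge \lambda_{\min}(\mathbf{M})\|\vec{x}\|^2$ reverses the inequality, so the construction above does not immediately yield the claim in that range. The cleanest way to treat all indices uniformly is to invoke Ostrowski's theorem on congruence transformations, which states that $\lambda_k(\mathbf{M}^{1/2}\mathbf{Q}\mathbf{M}^{1/2})=\theta_k\,\lambda_k(\mathbf{Q})$ for some scalar $\theta_k \in [\lambda_{\min}(\mathbf{M}),\lambda_{\max}(\mathbf{M})]$; the asserted bound then follows directly for the non-negative part of the spectrum of $\mathbf{Q}$, and reading off the correct direction of Ostrowski's factor pinpoints exactly how the statement must be interpreted (or refined) for the negative part.
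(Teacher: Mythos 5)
Your reduction and the min--max argument are exactly the paper's proof: the authors also write $\lambda_k(\mathbf{QM})=\lambda_k(\mathbf{M}^{1/2}\mathbf{Q}\mathbf{M}^{1/2})$ via the $\mathbf{AB}\sim\mathbf{BA}$ identity and then apply Courant--Fischer, pulling $\lambda_{\min}(\mathbf{M})$ out of the Rayleigh quotient. So on the range where your argument closes, you have reproduced their proof.

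The more important point is the obstacle you flag at the end: it is not a cosmetic issue but a genuine gap, and it sits in the paper's own proof as well. The pointwise step
$\frac{\langle \mathbf{Q}\mathbf{M}^{1/2}x,\mathbf{M}^{1/2}x\rangle}{\langle\mathbf{M}^{1/2}x,\mathbf{M}^{1/2}x\rangle}\cdot\frac{\langle \mathbf{M}x,x\rangle}{\langle x,x\rangle}\ge \lambda_{\min}(\mathbf{M})\cdot\frac{\langle \mathbf{Q}\mathbf{M}^{1/2}x,\mathbf{M}^{1/2}x\rangle}{\langle\mathbf{M}^{1/2}x,\mathbf{M}^{1/2}x\rangle}$
reverses whenever the $\mathbf{Q}$-Rayleigh quotient is negative, so the lemma's claim ``for all $k$'' cannot be salvaged: take $\mathbf{Q}=\mathrm{diag}(1,-1)$ and $\mathbf{M}=\mathrm{diag}(1,2)$, so that $\mathbf{QM}=\mathrm{diag}(1,-2)$; then $\lambda_{\min}(\mathbf{Q})\,\lambda_{\min}(\mathbf{M})=-1$ while $\lambda_{\min}(\mathbf{QM})=-2$, violating the stated inequality. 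Your Ostrowski formulation, $\lambda_k(\mathbf{M}^{1/2}\mathbf{Q}\mathbf{M}^{1/2})=\theta_k\lambda_k(\mathbf{Q})$ with $\theta_k\in[\lambda_{\min}(\mathbf{M}),\lambda_{\max}(\mathbf{M})]$, makes the correct dichotomy explicit: the asserted bound holds exactly when $\lambda_k(\mathbf{Q})\ge 0$, and for $\lambda_k(\mathbf{Q})<0$ the valid bound is $\lambda_k(\mathbf{Q})\,\lambda_{\max}(\mathbf{M})\le\lambda_k(\mathbf{QM})$ --- which is precisely the inequality the paper separately imports from the cited corollary of Zhang for $\lambda_{\min}$. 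Since the lemma is only ever applied to $\lambda_{\max}(\mathbf{Q})>0$, nothing downstream in the paper breaks, but the lemma should be restated with the hypothesis $\lambda_k(\mathbf{Q})\ge 0$; your proposal, completed via Ostrowski (with the $\varepsilon$-perturbation for singular $\mathbf{M}$), proves that corrected statement cleanly and in fact more transparently than the paper's argument.
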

\begin{proof}
	$\MB{M}$ is psd and $\MB{Q}$ and $\MB{M}$ are symmetric, hence $\lm{k}{\MB{QM}}=\lm{k}{\MB{Q}\sqrt{\MB{M}}\sqrt{\MB{M}}}=\lm{k}{\sqrt{\MB{M}} \MB{Q} \sqrt{\MB{M}}}$ where $\sqrt{\MB{M}}$ is the principal square root of $\MB{M}$. 
	Using the min-max theorem we find
	$
	\lm{k}{\MB{QM}}=
	\underset{{\dim(F)=k}}{\MR{min}} \left( \max_{x\in F\backslash \{0\}} \frac{\langle Q\sqrt{\MB{M}}x,\sqrt{\MB{M}}x\rangle}{\langle\sqrt{\MB{M}}x,\sqrt{\MB{M}}x\rangle}\frac{\langle \MB{M}x,x \rangle}{{\langle x,x \rangle}}\right)
   \geq\lm{min}{\MB{M}}\,\underset{\dim(F)=k}{\MR{min}} \left( \max_{x\in F\backslash \{0\}} \frac{\langle \MB{Q}\sqrt{\MB{M}}x,\sqrt{\MB{M}}x\rangle}{\langle\sqrt{\MB{M}}x,\sqrt{\MB{M}}x\rangle}\right)
	$
    because  $\frac{\langle \MB{M}x,x\rangle}{\langle x,x\rangle} \ge \lambda_{min}(\MB{M})$.\\
Again using the min-max theorem we get $\lm{k}{\MB{QM}} \geq \lm{min}{\MB{M}} \lm{k}{\MB{Q}}$.
\end{proof}
Based on Lemma~\ref{lm:qm} we have $\lm{max}{\MB{Q}} \lm{min}{\MB{M}} \leq \lm{max}{\MB{QM}}$
for $\MB{Q}:=\MB{Q}_{ijl}$ as specified above.
In the setting $\lm{min}{\MB{Q}}<0< \lm{max}{\MB{Q}}$, we can use
\cite{zhang2006eigenvalue}(corollary 10) to infer $\lm{min}{\MB{Q}} \lm{max}{\MB{M}} \leq \lm{min}{\MB{QM}}$.
Combining these two inequalities results in the inequality
\begin{equation}
\lm{min}{\MB{Q}} \lm{max}{\MB{M}}+\lm{max}{\MB{Q}} \lm{min}{\MB{M}} \leq \Tr(\MB{QM})
\label{sic}
\end{equation}
Eq.~\ref{eq:trc} induces the objective $\Tr(\MB{QM})<0$, hence the left hand side of Eq.~\ref{sic} should be negative, i.e.\  $-\frac{\lm{min}{\MB{Q}}}{\lm{max}{\MB{Q}}} > \frac{ \lm{min}{\MB{M}}}{\lm{max}{\MB{M}}}$.
Hence a triplet $i,j,l$ with a small value  $r=-\frac{\lm{min}{\MB{Q}}}{\lm{max}{\MB{Q}}}$ imposes a tight constraint on the eigenvalue formation of $\MB{M}$, hence we expect an induced small feasible set $S_{ijl}$. 
Note that the feasible domain $S$ results as intersection of the feasible sets $S_{ijl}$.
We  include this observation and the according measure $r=r_{ijl}$ into the optimization framework in the form of an 
according weighting.
\section{Feasibility Based LMNN} \label{sec:fb_lmnn}
For a vector $\vec{x}_i$ and a given target $\vec{x}_j \in \MC{N}^{k}_{i}$ we define $R_{ij}:=\underset{\vec{x}_l \in \MC{I}^{k}_{i}}{\min} (r_{ijl})$.
We formulate \emph{feasibility-based LMNN}\/ as the following LMNN problem which incorporates according weights of the objective:
\begin{equation}\label{eq:fblmnn}
\begin{array}{ll}
\underset{\mathbf{M}}{\min}  &(1-\mu){\sum}_i\sum_{j \in \MC{N}^{k}_{i}} R_{ij}\DM(\vec x_i,\vec x_j) + \mu \sum_i\sum_{j \in \MC{N}^{k}_{i}}R_{ij}\sum_{l\in\MC{I}^{k}_{i}}\xi_{ijl}\\
\mathrm{s.t.} & \DM(\vec x_i,\vec x_l) - \DM(\vec x_i,\vec x_j) \ge 1-\xi_{ijl}\\
&\xi_{ijl}\ge 0 , ~~{\mathbf{M}} \succeq 0, ~~ \forall i,j \in \MC{N}^{k}_{i} ,l\in\MC{I}^{k}_{i}
\end{array}
\end{equation}
Unlike original LMNN, infeasible or challenging triplets carry less weighting in this formulation.
We dub the resulting method
 FB-LMNN.
 FB-LMNN is implemented by first determining the neighborhood, computing corresponding weights $R_{ij}$, and then 
 solving the convex optimization problem w.r.t.\ matrix $\mathbf{M}$. In addition, a multiple passes strategy
 can
 be used to increase the resulting accuracy \cite{Weinberger2009}.

\section{Experiments}\label{sec:exp}
We evaluate our algorithm on both synthetic and real data, and compare it with $k$NN, single-path LMNN (SP-LMNN), multiple passes LMNN (MP-LMNN)\cite{Weinberger2009} and multi-class SVM \cite{crammer2001algorithmic}.
For LMNN  we use a neighborhood size  $k=5$ and weighting $\mu=0.5$. Evaluation is done in a 10-fold cross validation.
SVM uses the respective best result obtained with a linear, RBF-, or polynomial kernel.

\subsection{Synthetic Data}
The synthetic dataset is a variation  of the 2D zebra stripe data from \cite{Weinberger2009} in which two classes of data alternate  (Fig.\ref{figz1}-left). 
In contrast to \cite{Weinberger2009}, the nearest targets to each data point \xj{i} are located on the same stripe.
On each stripe, the impostors and targets are almost placed on a straight line, resulting in very tight or infeasible constraints of the optimization framework. 
Even multiple-pass LMNN (Fig.\ref{figz1}-middle) does hardly change this selection of impostors and targets.  Consequently multiple-pass LMNN converges to a non-optimal solution $\MB{M}$ with classification accuracy of $23.51\%$ (almost the same as $k$NN's).

On the other hand, FB-LMNN assigns small $R_{ij}$ weights to pairs within the same stripe  while bigger weights to pairs in neighbored stripes. Therefore it obtains a different matrix $\MB{M}$ which results in a different scaling of the space (Fig.\ref{figz1}-right), and a classification accuracy of $72.21\%$. 
\begin{figure}[tb]
	\minipage{0.45\textwidth}
	\includegraphics[width=\linewidth]{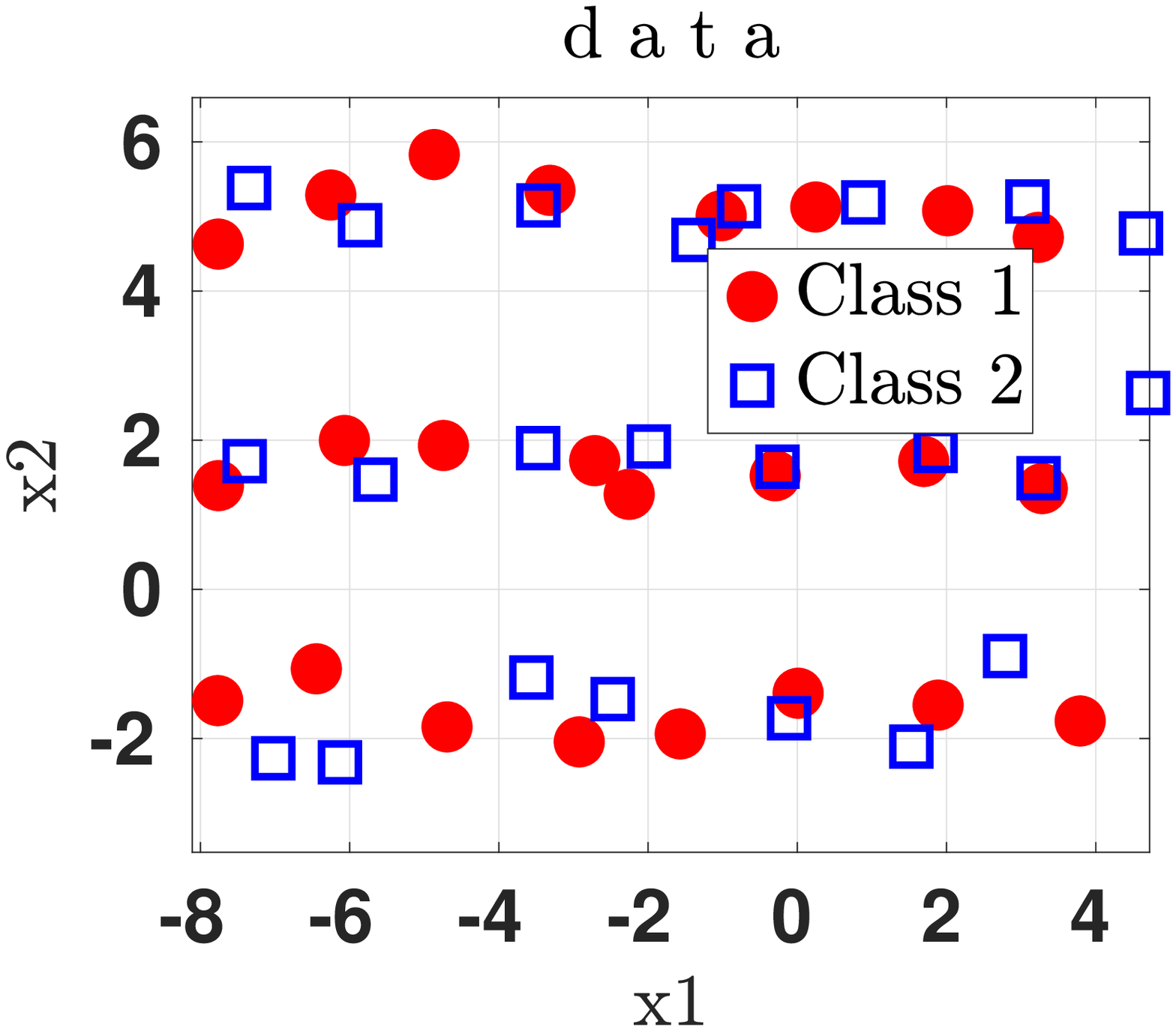}
	\endminipage
	\hspace{1mm}
	\minipage{0.45\textwidth}
	\includegraphics[width=\linewidth]{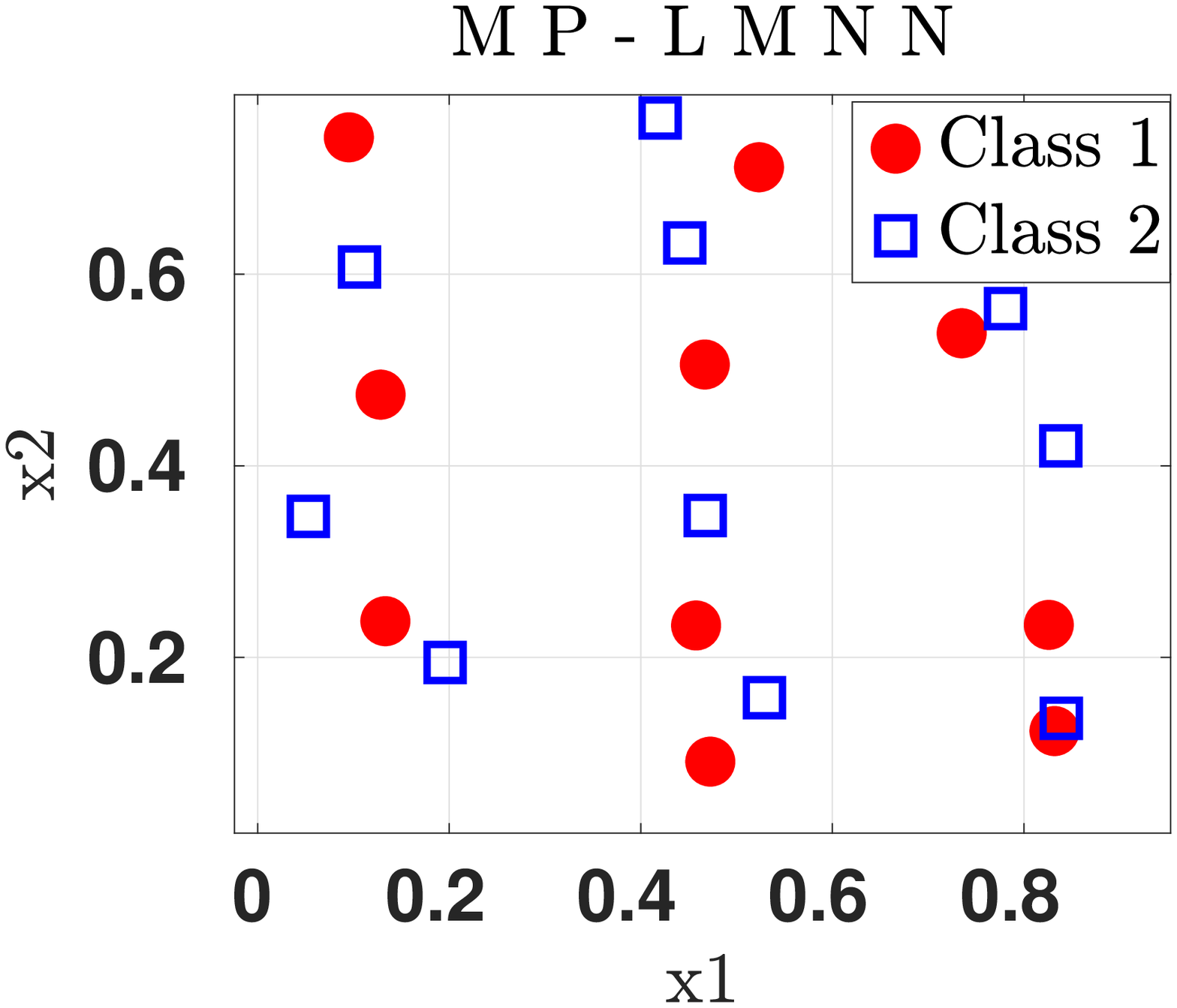}
	\endminipage
    \vspace{0.5cm}
	\minipage{0.45\textwidth}%
	\includegraphics[width=\linewidth]{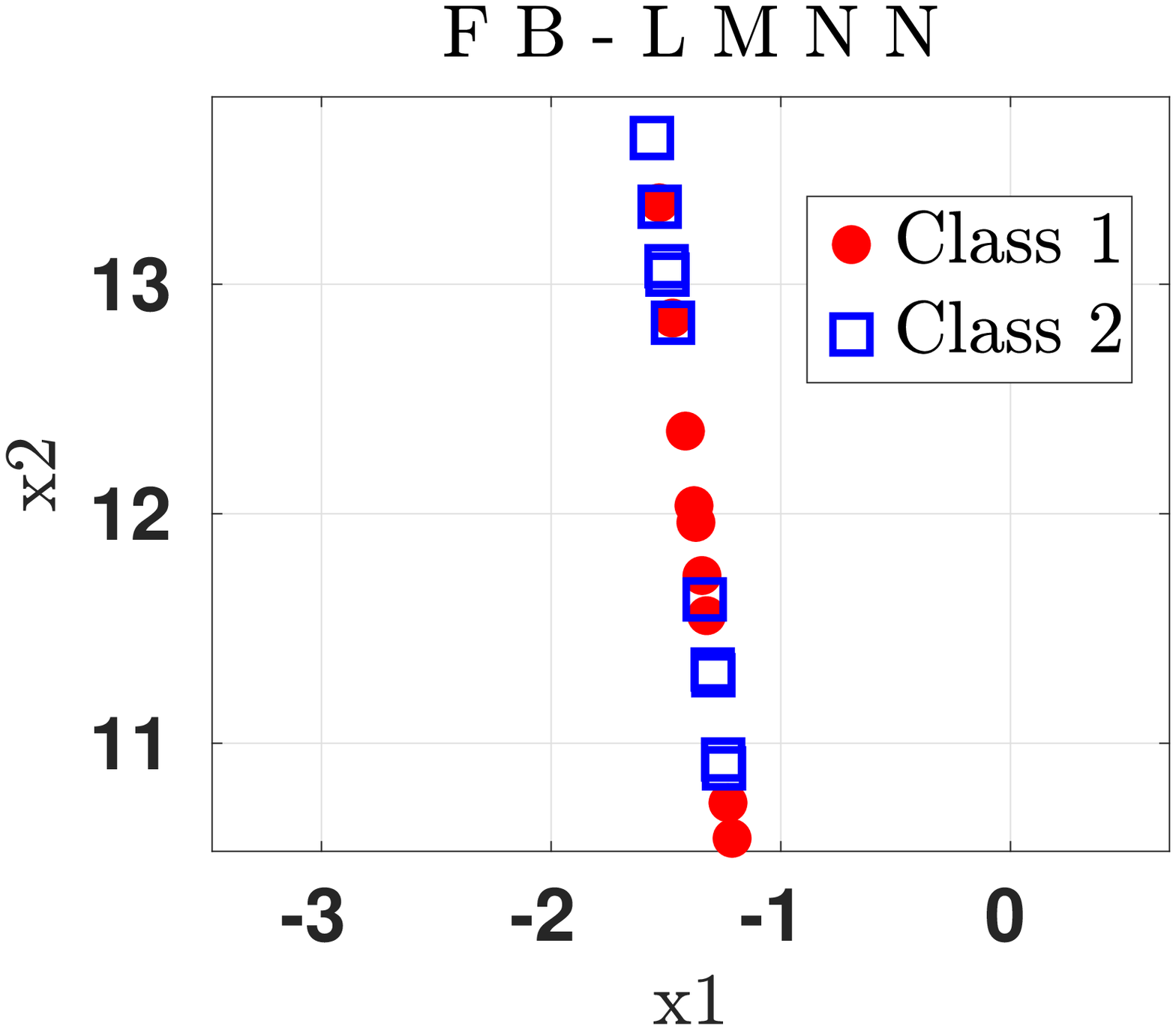}
	\endminipage
		\caption{Zebra dataset (top-left), MP-LMNN (top-right) and FB-LMNN (bottom)}
		\label{figz1}
\end{figure}
\subsection{Real Datasets}
Real datasets are mostly taken from the UCI repository library\footnote{https://archive.ics.uci.edu/ml/datasets.html}; in addition, we consider the
extended Yale face dataset\footnote{http://vision.ucsd.edu/~iskwak/ExtYaleDatabase/ExtYaleB.html} 
and the MNIST handwritten digits\footnote{http://yann.lecun.com/exdb/mnist/}, which constitute benchmarks in this domain.
 The selected datasets cover different application areas and types of the data set.
 For the Yale face, MNIST, and isolet datasets we follow the procedure proposed in \cite{Weinberger2009} as 
 regards preprocessing by means of dimensionality reduction and the cross-validation procedure for evaluation.

Results are shown in Table \ref{tab:res}. FB-LMNN significantly surpasses MP-LMNN in some cases, demonstrating the
effectivity of the proposed feasibility measure $R$. 
For a few data sets, such as the Yale dataset, no significant difference is observed.
Interestingly, results as obtained by SVM are mostly on par, but in some cases worse as compared to FB-LMNN,
whereby SVM restricts to the standard Euclidean metric. It would be an interesting endeavor to test the effect of the 
metric learned by FB--LMNN on the result of SVM.
\begin{table}[tb]
	\renewcommand{\arraystretch}{0.9}
	\small
	\caption{Classification accuracy(\%) and datasets' characteristics.}
	\label{tab:res}
	\centering	
	\begin{tabular}{|c|c|c|c|c|c|c|c|} 
			\hline
			
				 dataset & class &  dimension & $k$NN & sp-lmnn & mp-lmnn & \textbf{fb-lmnn} & SVM\\
		\hline		
		Zebra  & 2 & 2& 21.31 & 22.41 &23.51 & \textbf{72.21} & 50.82\\
		\hline		
		Wine & 3&13 &  76.20 & 92.84 &93.91 & \textbf{98.77} & 78.23\\
		\hline		
		Balance &3& 4&  83.42 & 88.45 &94.03 & 96.08 & \textbf{97.5}\\
		\hline		
		B.~Cancer  &20 & 30& 94.66 &94.88 & 96.68 & \textbf{97.07} & 78.49\\
			\hline
			Car Eval. & 4& 6& 92.57 & 95.12 & \textbf{98.32} & \textbf{98.4} & 60.08\\
			\hline
	Tic-Tac-Toe &2 & 9&  87.42 & 91.46 &97.66 & \textbf{98.13} & 85\\
			\hline
			Hepatitis & 2&17 & 84.16 & 84.46 &84.46 & \textbf{90} & 79.11\\
			\hline
			iris & 3 & 4&  94.93 & 95.02 &95.61 & \textbf{96.05} & \textbf{96.13}\\
			\hline
			isolet & 26 & 172 &  91.02 & 95.64 & 95.70 & \textbf{96.85} & 96.60\\
			\hline
			YFace & 38 & 300 &  89.21 &  94.10 & \textbf{94.48} & \textbf{94.48} &  84.78 \\
			\hline
			MNIST &10 &164 &  97.57 &  98.28 & 98.31 & \textbf{98.92} & 98.80 \\
			\hline
	\end{tabular}
\end{table}
\section{Conclusion}\label{sec:conc}
In this paper we studied the role of  target neighbors $\MC{N}^{k}_{i}$ on the feasibility of the  constraints in LMNN's optimization problem.
We proposed a quantitative  measure for the degree of  feasibility of triplets of a target and impostor for a data point, and we
demonstrated that this measure constitutes an efficient and effective weighting scheme to be integrated into LMNN's optimization. The results of several experiments clearly demonstrate the effect of the proposed technology.

\newpage
\bibliographystyle{unsrt}
\bibliography{Ref4Papers_CS}


\end{document}